\title{A New Algorithm for   the Inverse of Matrices with Noncommuting Entries}
\author[1]{Albert Much\footnote{amuch@matmor.unam.mx}}
\author[2]{Diego Vidal-Cruzprieto \footnote{diego.vidal@correo.nucleares.unam.mx}}
\affil[1]{Centro de Ciencias Matem\'aticas\\
	UNAM\\
	Morelia, Michoac\'an, Mexico}  
\affil[2]{Instituto de Ciencias Nucleares, UNAM,  D.F., M\'exico }    
\newtheorem{theorem}{\textsc{Theorem}}[section]
\newtheorem{proposition}{\textsc{Proposition}}[section]
\theoremstyle{definition}
\newtheorem{defi}{\textsc{Definition}}[section]
\theoremstyle{remark}
\newtheorem{remark}{Remark}[section]
\numberwithin{equation}{section} 
\begin{document}
	\maketitle
	\abstract{By using the quasi-determinant   the construction of the authors in  (\cite{G})  leads  to the inverse of a matrix with noncommuting entries. In this work we offer  a new method  that is more suitable for   physical purposes and motivated by   deformation quantization, where our constructed algorithm  emulates the commutative case and in addition gives corrections coming from the noncommutativity of the entries. Furthermore, we provide an equivalence of the introduced algorithm  and the construction via quasi-determinants.} 
 \tableofcontents
	
	\maketitle
	\section{Introduction} 
	In Feynman's famous work that introduced the path-integral formalism \cite{Fe}, i.e. where a formulation is constructed that is an alternative approach towards nonrelativistic quantum mechanics in addition to  the other two important formulations (the Schr\"odinger and the Heisenberg formulation), an interesting statement   stands out in the introduction: \textit{"The formulation is mathematically equivalent
		to the more usual formulations. There are,
		therefore, no fundamentally new results. However,
		there is a pleasure in recognizing old things
		from a new point of view. Also, there are problems
		for which the new point of view offers a
		distinct advantage."}
	\newline\newline
	Exactly, in this spirit we discuss the results of the current work. Namely, we give   an alternative method, to the one given in \cite{G}, of calculating the inverse of matrices that posses noncommutative entries. Although, we further prove that the formulations are mathematically equivalent, this new viewpoint offers a distinct advantage with respect to physical problems. 
	\\\\
	In particular w.r.t.\ quantum mechanics and certain trials of  quantum gravity, such as noncommutative geometry (see for example   \cite{DFR}, \cite{W}, \cite{A1}, \cite{BM} to mention a few examples)  we  deal with noncommuting elements and from a deformation quantization (see \cite{WA} and references therein) point of view (one approach in noncommutative geometry) an effective approach is guided by the following principle: Formulate   noncommutative frameworks  in terms of   commutative theories  and add to them corrections    that enter in terms of commutators, since commutators contain a deformation parameter (see   \cite{BM, MR1, MR2} for most recent examples). For calculations of  inverses of certain noncommutative  generalizations of commutative geometrical Riemannian entities (for example the metric) the use of the quasi-determinant \`a la \cite{G} is not very helpful in taking care of  orders in the deformation parameter. Therefore, an algorithm that gives the commutative inverse plus perturbations in the deformation parameter (hence taking care of the order of perturbation under investigation) becomes technically helpful for certain effective noncommutative considerations.
	\\\\
	By applying the aforementioned  principles we are able to formulate a new method for calculating the inverses of   matrices with noncommuting entries. We  describe this method briefly; take a $d\times d$-dimensional matrix $\mathbb{A} \in \mathbb{M}_{d}(\mathcal{A})$ with entries in a noncommutative (associative and unital) algebra $\mathcal{A}$. In the next step, take its commutative inverse, i.e. formally the inverse of the matrix if the entries would be commutative, denoted by $\mathbb{A}_c$ and calculate the remainder $$\mathbb{B}=\mathbb{A}\cdot \mathbb{A}_c-\mathbb{I},$$ 
	where $\mathbb{I}$ denotes the unit matrix. It is obvious from the former equation  that the remainder is zero in the case of the entries being commutative.  In the next step write the remainder in terms of the decomposition $$\mathbb{B}=\mathbb{A}\cdot\mathbb{T}.$$ 
	Since we know  explicitly the remainder  $\mathbb{B}$ and the matrix $\mathbb{A}$ we can calculate the left hand side of the decomposition of the remainder, namely $\mathbb{T}$. Then, the inverse of the matrix $\mathbb{A}$ is given by $$\mathbb{A}_c-\mathbb{T}.$$ 
	The introduced algorithm is applied  in this work to calculate the left inverse and  the right inverse respectively for the two dimensional case, i.e. for matrices belonging to $\mathbb{M}_{2}(\mathcal{A})$. The motivation to consider this case  is the fact that in physics two-dimensional models usually serve as  useful toy models. However, the  four-dimensional case (which is interesting in physics due to the spacetime dimension) belongs to a more realistic class and it can     be deduced easily from the two by two case. In  particular using the case introduced in this work one can deduce a formula for matrices in $\mathbb{M}_{2^n}(\mathcal{A})$ for a positive integer  $n$.\newline\newline
	Unexpectedly our construction sheds some light on the discussion w.r.t.\ the appropriate ordering of the determinant, i.e.\  $\det\mathbb{A}=ad-bc$ or the ordering $\det\mathbb{A}=ad-bc$. It is obvious that for a noncommutative division ring those two definitions differ, as was first noted by Caley \cite{Ca}. It turns out that the answer in our framework is straightforward, namely depending on which inverse one constructs (i.e.\ right or left inverse) the chosen ordering gives more manageable decompositions.  
	In particular for the left inverse the  ordering $\det\mathbb{A}=ad-bc$   and for the right inverse the  ordering $\det\mathbb{A}=ad-cb$ is more natural with regards to the decomposition. \newline\newline 
	The work  is organized as follows;   We explicitly calculate the algorithm for two dimensions and give a few propositions that follow from this construction. Moreover, we prove the equivalence to the Gel'fand-method that uses the quasi-determinant.  The last section discusses the different possible definitions of the determinant and their effect on our construction. 
\section{The $\mathbb{M}_{2}(\mathcal{A})$ case}
In this section we apply the algorithm that was sketched in the introduction to the two dimensional case. First we define $\mathcal{A}$  to be   an associative unital algebra $\mathcal{A}$ (see for example \cite[Chapter 2.1]{landi_2014}) over the field of complex numbers $\mathbb{C}$ with a bilinear product $\mathcal{A}\times\mathcal{A}\rightarrow \mathcal{A}$, $\mathcal{A}\times\mathcal{A}\ni(a,b)\rightarrow ab\in\mathcal{A}$, which is distributive over addition   but (in general) not commutative.  Consider a matrix $\mathbb{A} \in \mathbb{M}_{2}(\mathcal{A})$, with noncommutative  valued entries belonging to  $\mathcal{A}$, i.e.\
\begin{align}\label{matrix}
\mathbb{A}=\left(\begin{matrix}
a & b\\ c& d
\end{matrix}\right).
\end{align}
Next, we define the corresponding formal commutative inverses of the matrix $\mathbb{A}$, as those matrices that one would obtain with the common formula of inversion (i.e.\ if the entries of the matrix $\mathbb{A}$ were commutative), where we choose a particular ordering for the determinant that will become clear in the subsequent discussion.  Moreover, we define the so called residues that help us to calculate the inverses  later on.
\begin{defi} \label{res} Let the matrices
  $\,_{c}\mathbb{A}^{-1}_L$ and  $\,_{c}\mathbb{A}^{-1}_R$ be defined as  the left and right formal \textbf{commutative inverses} of  the matrix $\mathbb{A}$ respectively, i.e.\
	\begin{align*}
	\,_{c}\mathbb{A}^{-1}_L={(\det\mathbb{A})}^{-1}\left(\begin{matrix}
	d &-b \\ -c & a
	\end{matrix}\right)\;\;\;\;
	\,_{c}\mathbb{A}^{-1}_R=\left(\begin{matrix}
	d &-b \\ -c & a
	\end{matrix}\right){(\det\mathbb{A})}^{-1},
	\end{align*}
	where $\det\mathbb{A}=ad-cb$. Moreover, let the  matrices $\mathbb{B}_L$ and  $\mathbb{B}_R$ be the so called left or right \textbf{residues} respectively for  the  matrix $\mathbb{A}$ and let them be defined by the following formulas 
		\begin{align*}
		\mathbb{B}_L:=\,_{c}\mathbb{A}^{-1}_L\cdot \mathbb{A}-\mathbb{I}\;\;\;\; \mathbb{B}_R:=\mathbb{A}\cdot \,_{c}\mathbb{A}^{-1}_R-\mathbb{I}.
		\end{align*}
\end{defi}
\begin{remark}
For the rest of this work we  denote the determinant by the symbol $\Delta$, i.e.\
$$\Delta:=\det\mathbb{A}=ad-cb$$
 and  moreover we assume that the determinant is not equal to zero, i.e.\ $\Delta\neq 0.$  In order to obtain the inverse this will be the only condition imposed on the elements of the matrix $\mathbb{A}$. By the choice of the determinant we as well chose an ordering of the elements since another (not equivalent) possibility could be $\det\mathbb{A}=ad-bc$.  We could have defined the determinant using the other order (see Section \ref{s3}) and although the residues differ the inverse  still agrees (as it should) with the one defined by Gel'fand
\end{remark}
 Next, we write the explicit form of the residues which has a nice form since it displays the noncommutativity explicitly by commutators. In particular, it becomes clear that if the elements of the matrix $\mathbb{A}$ are commutative the remainder has to be zero, which fits the narrative of deformation quantization.  
\begin{proposition}\label{residues} For the matrix $\mathbb{A} \in \mathbb{M}_{2}(\mathcal{A})$ (from Equation (\ref{matrix})) the explicit expressions for the left and right residues (see Definition  \ref{res}) in terms of commutators are given as 
	\begin{align*}
	\mathbb{B}_L=\Delta^{-1}\left(\begin{matrix}
	[d,a]-[b,c] & [d,b] \\ [a,c] & 0
	\end{matrix}\right),\;\;\;\;
	\mathbb{B}_R=\left(\begin{matrix}
	[c,b] & [b,a] \\ [c,d] & [d,a]
	\end{matrix}\right)\Delta^{-1}.
	\end{align*}
	\begin{proof}
	From Definition \ref{res} the left residue reads,
		\begin{align*}
		\mathbb{B}_L&=\,_{c}\mathbb{A}^{-1}_L\cdot \mathbb{A}-\mathbb{I}\\&=\Delta^{-1}\left(\begin{matrix}
		d &-b \\ -c & a
		\end{matrix}\right)\cdot\left(\begin{matrix}
		a & b\\ c& d
		\end{matrix}\right)-\mathbb{I}\\
		&=\Delta^{-1}\left(\begin{matrix}
		da-bc & db-bd \\ -ca+ac & -cb+ad
		\end{matrix}\right)-\mathbb{I}\\&=\Delta^{-1}\left(\begin{matrix}
		da-bc & [d,b] \\ [a,c] & -cb+ad
		\end{matrix}\right)-\mathbb{I}\\
		&=\Delta^{-1}\left(\begin{matrix}
		ad+[d,a]-cb-[b,c] & [d,b] \\ [a,c] & -cb+ad
		\end{matrix}\right)-\mathbb{I}\\
		&=\Delta^{-1}\left(\begin{matrix}
		[d,a]-[b,c] & [d,b] \\ [a,c] & 0
		\end{matrix}\right).
		\end{align*} 
		For the right residue Definition \ref{res}  and a straight forward calculation yield
		\begin{align*}
		\mathbb{B}_R&=\mathbb{A}\cdot \,_{c}\mathbb{A}^{-1}_R-\mathbb{I}\\&=\left(\begin{matrix}
		a & b\\ c& d
		\end{matrix}\right)\cdot\left(\begin{matrix}
		d &-b \\ -c & a
		\end{matrix}\right)\Delta^{-1}-\mathbb{I}\\
		&=\left(\begin{matrix}
		ad-bc & -ab+ba \\ cd-dc & -cb+da
		\end{matrix}\right)\Delta^{-1}-\mathbb{I}\\&=\left(\begin{matrix}
		ad-bc & [b,a] \\ [c,d] & da-cb
		\end{matrix}\right)\Delta^{-1}-\mathbb{I}\\
		&=\left(\begin{matrix}
		ad-cb-[b,c] & [b,a] \\ [c,d] & ad+[d,a]-cb
		\end{matrix}\right)	\Delta^{-1}-\mathbb{I}\\
		&=\left(\begin{matrix}
		[c,b] & [b,a] \\ [c,d] & [d,a]
		\end{matrix}\right)\Delta^{-1}.
		\end{align*}
	\end{proof}	
\end{proposition}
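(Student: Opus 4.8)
The plan is to establish both identities by direct computation of the two defining matrix products, taking care at every step that each product is written in the correct order, since $\mathcal{A}$ is noncommutative and nothing may be commuted by hand. For $\mathbb{B}_L$ I would first form $\,_{c}\mathbb{A}^{-1}_L\cdot\mathbb{A}$ by multiplying $\left(\begin{smallmatrix} d & -b \\ -c & a\end{smallmatrix}\right)$ into $\mathbb{A}$ while keeping the overall factor $\Delta^{-1}$ on the left throughout. This yields the four raw entries $da-bc$, $db-bd$, $ac-ca$ and $ad-cb$ in the usual positions. The two off-diagonal entries are already commutators, namely $[d,b]$ and $[a,c]$, so no further manipulation is needed there; the whole content of the lemma sits on the diagonal and in the subtraction of $\mathbb{I}$.

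The one genuine step is handling that subtraction after a common factor $\Delta^{-1}$ has been pulled out. The crucial observation is that, because $\Delta$ is invertible in the unital algebra, one has $\Delta^{-1}\Delta=1$, so the identity can be written as $\mathbb{I}=\Delta^{-1}(\Delta\,\mathbb{I})$ and absorbed into the common left factor. Subtracting $\Delta$ from each diagonal raw entry and inserting the definition $\Delta=ad-cb$ then makes the cancellations transparent: the top-left entry becomes $da-bc-\Delta=da-ad-bc+cb=[d,a]-[b,c]$, while the bottom-right entry becomes $ad-cb-\Delta=0$ and collapses entirely. This vanishing of the lower-right corner is precisely the effect of having chosen the ordering $\Delta=ad-cb$ rather than $ad-bc$, and it is what singles out the left residue.

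For $\mathbb{B}_R$ the computation is entirely analogous, the only structural difference being that the factor $\Delta^{-1}$ now sits on the right; accordingly I would rewrite the identity as $\mathbb{I}=(\Delta\,\mathbb{I})\Delta^{-1}$ using $\Delta\Delta^{-1}=1$, and subtract $\Delta$ from the diagonal entries before factoring $\Delta^{-1}$ out on the right. Here the two corners swap roles: the top-left entry becomes $ad-bc-\Delta=cb-bc=[c,b]$ and the bottom-right becomes $da-cb-\Delta=da-ad=[d,a]$, while the off-diagonal entries are the commutators $[b,a]$ and $[c,d]$. I do not expect any real obstacle beyond bookkeeping: the sole risk in the noncommutative setting is mistaking the order of a product or placing $\Delta^{-1}$ on the wrong side, so the discipline of never exchanging two entries and only ever invoking $\Delta^{-1}\Delta=\Delta\Delta^{-1}=1$ is exactly what keeps the argument correct.
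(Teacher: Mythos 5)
Your proposal is correct and follows essentially the same route as the paper's own proof: a direct computation of the two matrix products, followed by recognizing the off-diagonal entries as commutators and handling the subtraction of $\mathbb{I}$ by comparing the diagonal entries with $\Delta=ad-cb$ (your device of writing $\mathbb{I}=\Delta^{-1}(\Delta\,\mathbb{I})$, respectively $\mathbb{I}=(\Delta\,\mathbb{I})\Delta^{-1}$, is just a cleaner bookkeeping of the cancellation the paper performs entrywise). All intermediate identities you state, e.g.\ $da-bc-\Delta=[d,a]-[b,c]$ and $da-cb-\Delta=[d,a]$, check out, so there is no gap.
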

Next, we define the decomposition that we use in order to calculate the inverse of the matrix $\mathbb{A}$. We give one for the left and one for the right residue that we defined and gave explicitly above.
\begin{defi}\label{dec}
	The left $\mathbb{T}_L$ or right $\mathbb{T}_R$ matrices are defined as   the  decomposition  that factorize  the residues (left and right respectively) in terms of the matrix $\mathbb{A}$   as follows
	\begin{align}
	\mathbb{B}_L=:\mathbb{T}_L\cdot\mathbb{A},\;\;\;\;\mathbb{B}_R=:\mathbb{A}\cdot\mathbb{T}_R.
	\end{align} 
\end{defi}
Next, we use the residues given in Proposition \ref{residues} and the concrete form of the matrix $\mathbb{A}$ to give the explicit form of the left   and right $\mathbb{T}$- matrices.
\begin{theorem} \label{t1}
	The decomposition matrices  $\mathbb{T}_L$ and $\mathbb{T}_R$ of the residue matrices $\mathbb{B}_L$ and   $\mathbb{B}_R$, given in Definition \ref{dec}, read explicitly 
	\begin{align*}
	\mathbb{T}_L
	&=\Delta^{-1}\left(\begin{matrix}
	 d-\Delta(a-bd^{-1}c)^{-1}& -\left(b+\Delta(c-db^{-1}a)^{-1}\right)\\ [a,c](a-bd^{-1}c)^{-1}  & [a,c](c-db^{-1}a)^{-1} 
	\end{matrix}\right),\\
	\mathbb{T}_R	&=\left(\begin{matrix}
	d-(a-bd^{-1}c)^{-1}\Delta & -b-(c-db^{-1}a)^{-1}\Delta \\
	-c-(b-ac^{-1}d)^{-1}\Delta& a-(d-ca^{-1}b)^{-1}\Delta
	\end{matrix}\right)\Delta^{-1}.	
	\end{align*}
	Moreover, the left decomposition  matrix has an equivalent form in terms of commutators which is given as,
	\begin{align*}
	\mathbb{T}_L
	&=\Delta^{-1}\left(\begin{matrix}
	([b,d]d^{-1}c+[d,a]+[c,b]) (a-bd^{-1}c)^{-1}&  \left([d,a]+[c,b]+[b,d]b^{-1}a\right)(c-db^{-1}a)^{-1}\\ [a,c](a-bd^{-1}c)^{-1}  & [a,c](c-db^{-1}a)^{-1} 
	\end{matrix}\right).
	\end{align*}
	
	\begin{proof}
		Following Definition (\ref{dec}) the left decomposition is
		\begin{align*}
		\Delta^{-1}\left(\begin{matrix}
		[d,a]-[b,c] & [d,b] \\ [a,c] & 0
		\end{matrix}\right)&=\mathbb{T}_L\cdot\mathbb{A}\\&=\left(\begin{matrix}
		L_{11} & L_{12} \\ L_{21} & L_{22}
		\end{matrix}\right)\cdot\left(\begin{matrix}
		a & b \\ c & d
		\end{matrix}\right)\\&=\left(\begin{matrix}
		L_{11}a+L_{12}c & L_{11}b+L_{12}d \\
		L_{21}a+L_{22}c & L_{21}b+L_{22}d
		\end{matrix}\right),
		\end{align*}
		where the subindex $L$ denotes we are solving for the left decomposition. The above defines a system of equations to be solved for $L_{ij}$, where $i,j=1,2$
		\begin{align}
		&\Delta^{-1}\left([d,a]-[b,c]\right)=L_{11}a+L_{12}c \label{fl.1}\\
		&\Delta^{-1}[d,b]=L_{11}b+L_{12}d \label{fl.2}\\
		&\Delta^{-1}[a,c]=L_{21}a+L_{22}c \label{fl.3} \\
		&0=L_{21}b+L_{22}d \label{fl.4}.
		\end{align}
	First we take Equation (\ref{fl.4}) and solve for $L_{22}$ to obtain the explicit expression $L_{22}=-L_{21}bd^{-1}$. Next,  we substitute the former equality into Equation (\ref{fl.3}) and solve for $L_{21}$
		\begin{align*}
		\Delta^{-1}[a,c]=L_{21}(a-bd^{-1}c) 	\end{align*}which leads to 	\begin{align*} L_{21}=\Delta^{-1}[a,c](a-bd^{-1}c)^{-1}.
		\end{align*}
		The former equality for the component  $L_{21}$ is substituted  into the expression for $L_{22}$ and we obtain
		\begin{align*}
		L_{22}=-\Delta^{-1}[a,c](a-bd^{-1}c)^{-1}bd^{-1}.
		\end{align*}
		Next, we take Equation (\ref{fl.2}) and solve for $L_{11}$, which yields $L_{11}=(\Delta^{-1}[d,b]-L_{12}d)b^{-1}$. We substitute the former expression in Equation  (\ref{fl.1}) and solve for $L_{12}$,
		\begin{align*}
		\Delta^{-1}\left([d,a]-[b,c]\right)&=\Delta^{-1}[d,b]b^{-1}a+L_{12}(c-db^{-1}a)	\end{align*}by basic operations we have $L_{12}=\Delta^{-1}\left([d,a]-[b,c]-[d,b]b^{-1}a\right)(c-db^{-1}a)^{-1}$ that as well can be written as \begin{align*}
		L_{12}  
		&=-\Delta^{-1}\left(bc-bdb^{-1}a+\Delta\right)(c-db^{-1}a)^{-1}\\
		&=-\Delta^{-1}b-(c-db^{-1}a)^{-1}\\
		&=-\Delta^{-1}\left(b+\Delta(c-db^{-1}a)^{-1}\right),
		\end{align*}
		which we substitute in the expression we had for $L_{11}$ to obtain
		\begin{align*}
		L_{11}&=\Delta^{-1}([d,b]d^{-1}+b+\Delta(c-db^{-1}a)^{-1})db^{-1}
	\\	&=
		\Delta^{-1}(dbd^{-1}+\Delta(c-db^{-1}a)^{-1})db^{-1}\\&=
		\Delta^{-1}(d+\Delta(bd^{-1}c- a)^{-1}),		\end{align*}
		that has the equivalent form in commutators
		\begin{align*}
	L_{11}&=	\Delta^{-1}(d(bd^{-1}c- a)+\Delta) (bd^{-1}c- a)^{-1}\\&=
		-\Delta^{-1}([b,d]d^{-1}c+[d,a]+[c,b])  (bd^{-1}c- a)^{-1}
		.\end{align*}
	Next, we explicitly calculate the right decomposition by making use of Definition (\ref{dec}), 
		\begin{align*}
		\left(\begin{matrix}
		[c,b] & [b,a] \\ [c,d] & [d,a]
		\end{matrix}\right)\Delta^{-1}&=\mathbb{A}\cdot\mathbb{T}_R\\&=\left(\begin{matrix}
		a & b \\ c & d
		\end{matrix}\right)\cdot \left(\begin{matrix}
		R_{11} & R_{12} \\ R_{21} & R_{22}
		\end{matrix}\right)\\&=\left(\begin{matrix}
		aR_{11}+bR_{21} & aR_{12}+bR_{22} \\ cR_{11}+dR_{21} & cR_{12}+dR_{22}
		\end{matrix}\right),
		\end{align*}
		as in the last case we solve for $R_{ij}$.
		\begin{align}
		[c,b]\Delta^{-1}&=aR_{11}+bR_{21}\label{f.1} \\
		[b,a]\Delta^{-1}&=aR_{12}+bR_{22}\label{f.2} \\ [c,d]\Delta^{-1}&=cR_{11}+dR_{21}\label{f.3} \\ [d,a]\Delta^{-1}&=cR_{12}+dR_{22} \label{f.4}.
		\end{align} 
		We start by solving for $R_{11}$, by taking Equation (\ref{f.1}) and subtracting $bR_{21}$ and then multiplying by $a^{-1}$ from the left to  obtain $R_{11}=a^{-1}([c,b]\Delta^{-1}-bR_{21})$. We substitute the last expression back into Equation (\ref{f.3})
		\begin{align*}
		[c,d]\Delta^{-1}&=ca^{-1}([c,b]\Delta^{-1}-bR_{21})+dR_{21}\\
		& =ca^{-1}[c,b]\Delta^{-1}+(d-ca^{-1}b)R_{21} 
			\end{align*}
 and then solve for the component $R_{21}$, where we have
			\begin{align*}
	R_{21}	&=(d-ca^{-1}b)^{-1}([c,d]-ca^{-1}[c,b])\Delta^{-1}\\
		&=(ac^{-1}d-b)^{-1}(ac^{-1}[c,d]-[c,b])\Delta^{-1}\\
		&=((ac^{-1}d-b)^{-1}\Delta-c)\Delta^{-1},
		\end{align*}
		substituting the value for $R_{21}$ in the expression we had for $R_{11}$ leads us to
		\begin{align*}
		R_{11}&=a^{-1}([c,b]\Delta^{-1}-bR_{21})\\&
		=a^{-1}([c,b]-b((ac^{-1}d-b)^{-1}\Delta-c))\Delta^{-1}\\
		&=a^{-1}(cb-b(ac^{-1}d-b)^{-1}\Delta)\Delta^{-1} \\
		&=a^{-1}(cb+(ad-ad)-b(ac^{-1}d-b)^{-1}\Delta)\Delta^{-1} \\
		&=a^{-1}(cb  -ad -b(ac^{-1}d-b)^{-1}\Delta)\Delta^{-1} +d\Delta^{-1}\\
		&=a^{-1}(-(ac^{-1}d-b) -b )(ac^{-1}d-b)^{-1}  +d\Delta^{-1}\\
		&=-c^{-1}d(ac^{-1}d-b)^{-1}  +d\Delta^{-1}\\
		&=-(a-bd^{-1}c)^{-1}  +d\Delta^{-1}\\
		\end{align*} 
		Next, we take Equation (\ref{f.2}) and solve for $R_{12}$, which gives $R_{12}=a^{-1}([b,a]\Delta^{-1}-bR_{22})$ that we substitute   into Equation (\ref{f.4}) 
		\begin{align*}
		[d,a]\Delta^{-1}=ca^{-1}([b,a]\Delta^{-1}-bR_{22})+dR_{22}
				\end{align*}
		and solve for $R_{22}$
				\begin{align*}
		R_{22}&=(d-ca^{-1}b)^{-1}([d,a]-ca^{-1}[b,a])\Delta^{-1}\\
		&=(a-(d-ca^{-1}b)^{-1}\Delta)\Delta^{-1},
		\end{align*}
		which further allows us to get the explicit form of $R_{12}$
		\begin{align*}
		R_{12}&=a^{-1}([b,a]-b(a-(d-ca^{-1}b)^{-1}\Delta))\Delta^{-1}\\
		&=(-b+(db^{-1}a-c)^{-1}\Delta)\Delta^{-1}.
		\end{align*}
	\end{proof}
\end{theorem}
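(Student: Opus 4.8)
The plan is to read Definition \ref{dec} as a linear system over the algebra $\mathcal{A}$ and to solve it by noncommutative back-substitution, keeping scrupulous track of the side from which one multiplies. First I would write $\mathbb{T}_L$ with unknown entries $L_{ij}$, expand the product $\mathbb{T}_L\cdot\mathbb{A}$, and equate it entrywise with the explicit left residue $\mathbb{B}_L$ from Proposition \ref{residues}. This produces four $\mathcal{A}$-valued equations, namely (\ref{fl.1})--(\ref{fl.4}). The key structural point is that, since $\mathbb{A}$ multiplies $\mathbb{T}_L$ from the right, the system decouples by rows: the pair $\{L_{21},L_{22}\}$ is governed by the bottom entries of $\mathbb{B}_L$ (one of which vanishes), while $\{L_{11},L_{12}\}$ is governed by the top entries.

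Second, I would solve the bottom row first, since the vanishing entry simplifies it. From (\ref{fl.4}) one gets $L_{22}=-L_{21}bd^{-1}$ (using that $d$ is invertible), and substituting into (\ref{fl.3}) collapses it to the single equation $\Delta^{-1}[a,c]=L_{21}(a-bd^{-1}c)$; right-multiplying by the inverse of the Schur-type factor $a-bd^{-1}c$ yields $L_{21}$ directly, matching the stated $(2,1)$ entry, and back-substitution then gives $L_{22}$ in the shape $-\Delta^{-1}[a,c](a-bd^{-1}c)^{-1}bd^{-1}$. For the top row I would eliminate $L_{11}$ through (\ref{fl.2}), insert it into (\ref{fl.1}), and solve for $L_{12}$ by right-multiplying with the inverse of the second Schur-type factor $c-db^{-1}a$. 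The right decomposition is handled by the mirror argument: in $\mathbb{B}_R=\mathbb{A}\cdot\mathbb{T}_R$ the matrix $\mathbb{A}$ acts from the left, so the system (\ref{f.1})--(\ref{f.4}) decouples by columns and is solved by \emph{left}-multiplying with $a^{-1}$ and with the inverse of the complement $d-ca^{-1}b$.

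Third, the raw solutions emerge in forms such as $-(a-bd^{-1}c)^{-1}bd^{-1}$ or with a bare commutator prefactor, and the remaining work is to massage them into the forms claimed in the theorem. The essential identities are of the type $bd^{-1}(c-db^{-1}a)=-(a-bd^{-1}c)$, which converts $(a-bd^{-1}c)^{-1}bd^{-1}$ into $-(c-db^{-1}a)^{-1}$ and thereby rewrites $L_{22}$ as $\Delta^{-1}[a,c](c-db^{-1}a)^{-1}$; to reach the $\Delta$-forms and the alternative commutator expression for $L_{11}$ and $L_{12}$ one completes $\Delta=ad-cb$ inside the brackets.

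I expect the main obstacle to be precisely this last bookkeeping: because every factor is noncommutative, each inversion must be applied on the correct side, and verifying that the Schur-complement form and the commutator form of the top row genuinely coincide requires several nonobvious rearrangements rather than a single cancellation. A secondary point worth flagging is that the derivation silently uses invertibility of $a,b,c,d$ together with that of the four complements $a-bd^{-1}c$, $c-db^{-1}a$, $b-ac^{-1}d$, $d-ca^{-1}b$; under the standing hypothesis $\Delta\neq0$ these are the natural genericity assumptions, and I would indicate at each step where they are invoked.
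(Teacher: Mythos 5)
Your proposal is correct and follows essentially the same route as the paper: set up the entrywise linear system from $\mathbb{B}_L=\mathbb{T}_L\cdot\mathbb{A}$ and $\mathbb{B}_R=\mathbb{A}\cdot\mathbb{T}_R$, solve by noncommutative back-substitution (right-multiplying by inverses of the Schur-type factors for $\mathbb{T}_L$, left-multiplying for $\mathbb{T}_R$), and then rewrite via identities such as $bd^{-1}(c-db^{-1}a)=-(a-bd^{-1}c)$. If anything, you are slightly more careful than the paper, which leaves $L_{22}$ in the raw form $-\Delta^{-1}[a,c](a-bd^{-1}c)^{-1}bd^{-1}$ without showing the conversion to the stated $[a,c](c-db^{-1}a)^{-1}$ entry, and which never states the auxiliary invertibility assumptions on $a,b,c,d$ and the complements that you correctly flag.
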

By using the explicit expression for the decomposition matrices $\mathbb{T}_L$ and $\mathbb{T}_R$ the inverses are given in the following theorem. 
\begin{theorem}[Inverse]\label{inv} Let the     matrices $\mathbb{T}_L$ and $\mathbb{T}_R$ be the decompositions given in Theorem \ref{t1}. Then, the  left and right   inverse matrices of the matrix $\mathbb{A}$  are given by the following expressions
	\begin{align}\label{rinv}
	\mathbb{A}^{-1}_L =\,_{c}\mathbb{A}^{-1}_L-\mathbb{T}_L,\;\;\;\;\mathbb{A}^{-1}_R =\,_{c}\mathbb{A}^{-1}_R-\mathbb{T}_R,
	\end{align}  
	where the left inverse  and the   right inverse matrices	   are equivalent to the  inverse   defined via quasideterminants  of  Gel'fand denoted by $\mathbb{A}_\mathcal{G}^{-1}$ (see \cite[Proposition 1.2.1.]{G}),
	\begin{align*}
		\mathbb{A}^{-1}_L=	\mathbb{A}^{-1}_R=\mathbb{A}_\mathcal{G}^{-1}=&\left(\begin{matrix}
	(a-bd^{-1}c)^{-1} & -(db^{-1}a-c)^{-1}\\ 
	-(ac^{-1}d-b)^{-1} & (d-ca^{-1}b)^{-1}
	\end{matrix}\right).\label{gel}
	\end{align*}\end{theorem}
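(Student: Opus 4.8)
The plan is to split the statement into two independent claims: first that $\mathbb{A}^{-1}_L$ and $\mathbb{A}^{-1}_R$ are genuine one-sided inverses, which falls out of the defining relations with essentially no computation, and then the identification of both with the single explicit matrix $\mathbb{A}_\mathcal{G}^{-1}$.

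First I would observe that the one-sided inverse properties are almost tautological. Combining Definition \ref{res} with Definition \ref{dec} for the left objects gives $\mathbb{T}_L\cdot\mathbb{A}=\mathbb{B}_L=\,_{c}\mathbb{A}^{-1}_L\cdot\mathbb{A}-\mathbb{I}$, so subtracting yields $(\,_{c}\mathbb{A}^{-1}_L-\mathbb{T}_L)\cdot\mathbb{A}=\mathbb{I}$, that is $\mathbb{A}^{-1}_L\cdot\mathbb{A}=\mathbb{I}$. The parallel chain $\mathbb{A}\cdot\mathbb{T}_R=\mathbb{B}_R=\mathbb{A}\cdot\,_{c}\mathbb{A}^{-1}_R-\mathbb{I}$ gives $\mathbb{A}\cdot(\,_{c}\mathbb{A}^{-1}_R-\mathbb{T}_R)=\mathbb{I}$, that is $\mathbb{A}\cdot\mathbb{A}^{-1}_R=\mathbb{I}$. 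Thus $\mathbb{A}^{-1}_L$ is a left inverse and $\mathbb{A}^{-1}_R$ a right inverse purely by construction, obtained by rearranging the definitions.

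Next I would extract the explicit matrices by substituting the formulas of Theorem \ref{t1}. For $\mathbb{A}^{-1}_L=\,_{c}\mathbb{A}^{-1}_L-\mathbb{T}_L$ every entry carries the common prefactor $\Delta^{-1}$, and in the top row the $d$ and $b$ summands telescope, leaving $(a-bd^{-1}c)^{-1}$ in position $(1,1)$ and $(c-db^{-1}a)^{-1}=-(db^{-1}a-c)^{-1}$ in position $(1,2)$. The bottom row requires simplifying $\Delta^{-1}\big(-c-[a,c](a-bd^{-1}c)^{-1}\big)$ and $\Delta^{-1}\big(a-[a,c](c-db^{-1}a)^{-1}\big)$; here I would reuse exactly the algebraic rearrangements already performed in the proof of Theorem \ref{t1} (such as those converting $-c^{-1}d(ac^{-1}d-b)^{-1}$ into $-(a-bd^{-1}c)^{-1}$ with the aid of $\Delta=ad-cb$) to identify these with $-(ac^{-1}d-b)^{-1}$ and $(d-ca^{-1}b)^{-1}$. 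The same substitution applied to $\,_{c}\mathbb{A}^{-1}_R-\mathbb{T}_R$ reproduces the identical matrix.

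Finally I would close the equivalence by uniqueness of inverses in the unital associative algebra $\mathbb{M}_2(\mathcal{A})$. Having $\mathbb{A}^{-1}_L\cdot\mathbb{A}=\mathbb{I}$ and $\mathbb{A}\cdot\mathbb{A}^{-1}_R=\mathbb{I}$, the standard identity $\mathbb{A}^{-1}_L=\mathbb{A}^{-1}_L(\mathbb{A}\,\mathbb{A}^{-1}_R)=(\mathbb{A}^{-1}_L\mathbb{A})\mathbb{A}^{-1}_R=\mathbb{A}^{-1}_R$ forces the two constructions to coincide and to be a two-sided inverse; since Gel'fand's Proposition 1.2.1 in \cite{G} exhibits $\mathbb{A}_\mathcal{G}^{-1}$ as a two-sided inverse as well, and a two-sided inverse is unique, all three matrices agree. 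I expect the only genuinely laborious step to be the explicit simplification of the lower row of $\mathbb{A}^{-1}_L$ and the off-diagonal entries of $\mathbb{A}^{-1}_R$, where the commutator $[a,c]$ must be absorbed against the quasideterminant factors; the uniqueness argument is precisely what renders this computation optional, so that it need not become the main obstacle.
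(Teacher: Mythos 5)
Your proposal is correct, and its first two steps coincide with the paper's own proof: the one-sided inverse identities $\mathbb{A}^{-1}_L\cdot\mathbb{A}=\mathbb{I}$ and $\mathbb{A}\cdot\mathbb{A}^{-1}_R=\mathbb{I}$ follow from exactly the same one-line rearrangement of Definitions \ref{res} and \ref{dec} that the paper uses, and your entry-wise simplification of $\,_{c}\mathbb{A}^{-1}_L-\mathbb{T}_L$ (telescoping top row, absorbing $[a,c]$ against the quasideterminant factors in the bottom row) is precisely the computation the paper carries out, once for the left inverse and once again for the right. Where you genuinely depart from the paper is the closing uniqueness argument. The paper never invokes uniqueness; it performs both explicit computations in full and observes that each result equals the quasideterminant matrix, so the equality $\mathbb{A}^{-1}_L=\mathbb{A}^{-1}_R$ appears there as the outcome of two long calculations. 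Your argument $\mathbb{A}^{-1}_L=\mathbb{A}^{-1}_L(\mathbb{A}\,\mathbb{A}^{-1}_R)=(\mathbb{A}^{-1}_L\mathbb{A})\mathbb{A}^{-1}_R=\mathbb{A}^{-1}_R$, valid in the associative unital algebra $\mathbb{M}_2(\mathcal{A})$, makes that agreement automatic and shows the common value is a two-sided inverse; combined with \cite[Proposition 1.2.1]{G}, which exhibits $\mathbb{A}_\mathcal{G}^{-1}$ as a two-sided inverse, it identifies all three matrices with no entry computation at all. What your route buys is brevity and a structural insight absent from the paper; what the paper's route buys is self-containedness, since it re-derives the explicit quasideterminant formula from the new algorithm rather than inheriting it from the cited result. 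Note, then, that your label ``optional'' for the computation is accurate only if one accepts Gel'fand's proposition as a black box; since the stated purpose of the theorem is to verify that the algorithm reproduces that formula, retaining at least one of the two explicit computations (as you in fact do for the left inverse) is the right call.
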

	\begin{proof}
		The left inverse gives the identity when acting by multiplication from the left on $\mathbb{A}$, while the right inverse gives the identity when acting by multiplication from the right on $\mathbb{A}$, which can be proven by a simple matrix multiplication, i.e.\
		\begin{align*}
		\mathbb{A}^{-1}_L\cdot\mathbb{A}=\left(\,_{c}\mathbb{A}^{-1}_L-\mathbb{T}_L\right)\cdot\mathbb{A}=\,_{c}\mathbb{A}^{-1}_L\cdot\mathbb{A}-\underbrace{\mathbb{T}_L\cdot\mathbb{A}}_{=\mathbb{B}_L}=\,_{c}\mathbb{A}^{-1}_L\cdot\mathbb{A}-\,_{c}\mathbb{A}^{-1}_L\cdot\mathbb{A}+\mathbb{I}=\mathbb{I},\\
		\mathbb{A}\cdot\mathbb{A}^{-1}_R=\mathbb{A}\cdot\left(\,_{c}\mathbb{A}^{-1}_R-\mathbb{T}_R\right)=\mathbb{A}\cdot\,_{c}\mathbb{A}^{-1}_R-\underbrace{\mathbb{A}\cdot\mathbb{T}_R}_{=\mathbb{B}_R}=\mathbb{A}\cdot\,_{c}\mathbb{A}^{-1}_R-\mathbb{A}\cdot\,_{c}\mathbb{A}^{-1}_R+\mathbb{I}=\mathbb{I}.
		\end{align*} 
		From the Equality (\ref{inv}) and from the explicit form of the left decomposition matrix (see Theorem \ref{t1}) it is easy to see that the left inverse matrix has explicit expression form 
		\begin{align} 
	\mathbb{A}^{-1}_L=\Delta^{-1}\left(\begin{matrix}
	-\Delta(bd^{-1}c-a)^{-1}& \Delta(c-db^{-1}a)^{-1}\\ -c-[a,c](a-bd^{-1}c)^{-1}  & a-[c,a](db^{-1}a-c)^{-1} 
	\end{matrix}\right).
	\end{align}
		 Next, we prove that the left-inverse matrix is the same as the Gel'fand inverse. For the components $(\mathbb{A}^{-1}_L)_{11}$ and $(\mathbb{A}^{-1}_L)_{12}$ it is obvious (by multiplying the inverse of the  determinant and by extracting a minus sign).  For the component $(\mathbb{A}^{-1}_L)_{21}$ we have,
		 \begin{align*}
		-\Delta(c+[a,c](a-bd^{-1}c)^{-1})&=-\Delta(c(a-bd^{-1}c)+[a,c])(a-bd^{-1}c)^{-1}\\&= -\Delta \,(-  cbd^{-1} +a)c(a-bd^{-1}c)^{-1}
		\\&= -\Delta \,(  -cb +ad)d^{-1}c(a-bd^{-1}c)^{-1}	\\&=
	-	d^{-1}c(a-bd^{-1}c)^{-1}	\\&=-(ac^{-1}d-b)^{-1},
		 \end{align*}
		 where in the last line we pulled the term $d^{-1}c$ into the inverse. For the remaining component $(\mathbb{A}^{-1}_L)_{22}$ we have 
		 		 \begin{align*}
		 		\Delta (a-[c,a](db^{-1}a-c)^{-1} )&=	\Delta (a(db^{-1}a-c)-[c,a] )(db^{-1}a-c)^{-1}\\&=
		 			\Delta (adb^{-1} -c  )a(db^{-1}a-c)^{-1}\\&=
		 			\Delta (ad -cb  )b^{-1}a(db^{-1}a-c)^{-1}\\&=
		 			 b^{-1}a(db^{-1}a-c)^{-1}\\&=(d-ca^{-1}b)^{-1}
		 		 \end{align*}
		While for the right inverse we use in an analogous fashion  Equality (\ref{inv}) and the explicit form of the right decomposition matrix (see Theorem \ref{t1}) and obtain
		\begin{align*}
		\mathbb{A}^{-1}_R&=\left(\begin{matrix}
		d &-b \\ -c & a
		\end{matrix}\right)\Delta^{-1}-\left(\begin{matrix}
		d-(a-bd^{-1}c)^{-1}\Delta & -b+(db^{-1}a-c)^{-1}\Delta \\
		(ac^{-1}d-b)^{-1}\Delta-c & a-(d-ca^{-1}b)^{-1}\Delta
		\end{matrix}\right)\Delta^{-1}\\
		&=\left(\begin{matrix}
		(a-bd^{-1}c)^{-1} &-(db^{-1}a-c)^{-1} \\ -(ac^{-1}d-b)^{-1} & (d-ca^{-1}b)^{-1}
		\end{matrix}\right),
		\end{align*}
		This shows that the our right inverse $\mathbb{A}^{-1}_R$ (as the left inverse) is equivalent to the inverse introduced by Gel'fand. 
	\end{proof}
 \section{The Aftermath}\label{s3}
 	An important question with regards to  our results and in particular to our algorithm is the following: How does changing the definition of the determinant (i.e. choosing another ordering)  reproduce  the result obtained by Gel'fand and how does it change the formerly obtained results?  This section is committed  to  answer those questions.
\begin{proposition}Let  the determinant be defined by a different ordering, i.e.: $\Delta'=ad-bc$ and let $\,_{c}\mathbb{A'}^{-1}_R$ denote the right commutative inverse w.r.t.\ that  ordering, i.e.
		\begin{align*}
	\,_{c}\mathbb{A'}^{-1}_R=\left(\begin{matrix}
	d &-b \\ -c & a
	\end{matrix}\right){\Delta'}^{-1},
	\end{align*}
	 Then, the  \textbf{right residue matrix} given by $\mathbb{A}\cdot \,_{c}\mathbb{A'}^{-1}_R-\mathbb{I}$ is explicitly given by
	\begin{align*}
	\mathbb{B'}_R&=\begin{pmatrix}
	0 & [b,a] \\ [c,d] & [d,a]-[c,b]
	\end{pmatrix}(\Delta')^{-1}.
	\end{align*}	
	and the \textbf{right decomposition matrix} $\mathbb{T'}_R$ w.r.t.\ the different determinant ordering extracted from decomposition $\mathbb{B'}_R=\mathbb{A}\cdot\mathbb{T}_R$ is given explicitly by
		\begin{align*}
	\mathbb{T'}_R&=\begin{pmatrix}
	(db^{-1}a-c)^{-1}[d,c]  &(db^{-1}a-c)^{-1}\left( [a,d]+[c,b]+db^{-1}[b,a]\right) \\ (d-ca^{-1}b)^{-1}[c,d]  &( ca^{-1}b-d)^{-1}\left([a,d]+[c,b]+ca^{-1}[b,a]\right)
	\end{pmatrix}(\Delta')^{-1}.
	\end{align*}
	The inverse matrix obtained by this ordering, i.e.\ by the equality $\mathbb{A'}^{-1}_R=\,_{c}\mathbb{A'}^{-1}_R-\mathbb{T'}_R$ is equivalent to the inverse obtained by the ordering $\Delta=ad-bc$ and therefore equivalent to the Gel'fand inverse, 
	\begin{align*}
\mathbb{A'}^{-1}_R=\mathbb{A}^{-1}_R=\mathbb{A}_\mathcal{G}^{-1}.
	\end{align*}
\end{proposition}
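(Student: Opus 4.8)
The plan is to prove the three assertions in turn, each mirroring the corresponding step from Proposition \ref{residues}, Theorem \ref{t1} and Theorem \ref{inv}, but now carried out with the alternative ordering $\Delta'=ad-bc$. First I would establish the form of $\mathbb{B'}_R$ by a direct matrix multiplication. Writing out $\mathbb{A}\cdot\,_{c}\mathbb{A'}^{-1}_R$ gives $\left(\begin{smallmatrix} ad-bc & [b,a] \\ [c,d] & da-cb\end{smallmatrix}\right)(\Delta')^{-1}$, exactly as in the proof of Proposition \ref{residues}. The whole point of the new ordering is that the $(1,1)$ entry is precisely $\Delta'=ad-bc$; hence after subtracting $\mathbb{I}$ the top-left entry vanishes, producing the $0$ in the claimed expression. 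For the $(2,2)$ entry I would rewrite $da-cb=ad+[d,a]-bc+[b,c]=\Delta'+[d,a]-[c,b]$, so that subtracting the identity leaves $([d,a]-[c,b])(\Delta')^{-1}$. This yields $\mathbb{B'}_R$ as stated.

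Next, for the decomposition, I would substitute $\mathbb{T'}_R=\left(\begin{smallmatrix}R'_{11} & R'_{12}\\ R'_{21} & R'_{22}\end{smallmatrix}\right)$ into $\mathbb{B'}_R=\mathbb{A}\cdot\mathbb{T'}_R$ and read off the four scalar equations $aR'_{11}+bR'_{21}=0$, $cR'_{11}+dR'_{21}=[c,d](\Delta')^{-1}$, $aR'_{12}+bR'_{22}=[b,a](\Delta')^{-1}$ and $cR'_{12}+dR'_{22}=([d,a]-[c,b])(\Delta')^{-1}$, just as in the proof of Theorem \ref{t1}. Solving the first column (insert $R'_{11}=-a^{-1}bR'_{21}$ into the second equation) produces $R'_{21}=(d-ca^{-1}b)^{-1}[c,d](\Delta')^{-1}$ directly, and solving the second column analogously gives $R'_{22}=(d-ca^{-1}b)^{-1}([d,a]-[c,b]-ca^{-1}[b,a])(\Delta')^{-1}$, which coincides with the stated $(2,2)$ entry after pulling the sign through both factors.

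I expect the rewriting of $R'_{11}$ and $R'_{12}$ to be the main obstacle. The key is the transfer identity $(db^{-1}a-c)=(d-ca^{-1}b)\,b^{-1}a$, equivalently $(db^{-1}a-c)^{-1}=a^{-1}b\,(d-ca^{-1}b)^{-1}$, which is the same manoeuvre of ``pulling a factor into the inverse'' used in Theorem \ref{inv}. Applying it to $R'_{11}=-a^{-1}b(d-ca^{-1}b)^{-1}[c,d](\Delta')^{-1}$ together with $[c,d]=-[d,c]$ turns it into $(db^{-1}a-c)^{-1}[d,c](\Delta')^{-1}$. For $R'_{12}$ the same identity reduces the task to checking that the cross terms $db^{-1}[b,a]-ca^{-1}[b,a]=(db^{-1}a-c)a^{-1}[b,a]$ collapse correctly after the commutators $[a,d]+[d,a]$ and $[c,b]-[c,b]$ cancel; this is the most delicate piece of bookkeeping, and it delivers the claimed form $(db^{-1}a-c)^{-1}([a,d]+[c,b]+db^{-1}[b,a])(\Delta')^{-1}$.

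Finally, for the equality with the Gel'fand inverse, the cleanest route is to observe that, by construction, $\mathbb{A}\cdot\mathbb{A'}^{-1}_R=\mathbb{A}\cdot\,_{c}\mathbb{A'}^{-1}_R-\mathbb{A}\cdot\mathbb{T'}_R=(\mathbb{B'}_R+\mathbb{I})-\mathbb{B'}_R=\mathbb{I}$, exactly as in the proof of Theorem \ref{inv}, so $\mathbb{A'}^{-1}_R$ is already a right inverse of $\mathbb{A}$. To identify it with the explicit matrix $\mathbb{A}^{-1}_R=\mathbb{A}_\mathcal{G}^{-1}$ I would then simplify $\,_{c}\mathbb{A'}^{-1}_R-\mathbb{T'}_R$ entry by entry, using the companion transfer identities such as $(ac^{-1}d-b)^{-1}=d^{-1}c(a-bd^{-1}c)^{-1}$, and verify that each of the four entries reproduces the Gel'fand matrix. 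Since these diagonal and off-diagonal simplifications are term-for-term the same as the ones already performed for $\mathbb{A}^{-1}_R$ in Theorem \ref{inv}, this last step is routine once the transfer identities are in hand.
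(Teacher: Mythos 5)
Your proposal is correct and follows essentially the same route as the paper's own proof: the same direct computation of $\mathbb{B'}_R$, the same four-equation system solved in the same column order with the same transfer identity $(db^{-1}a-c)^{-1}=a^{-1}b(d-ca^{-1}b)^{-1}$, and the same component-wise identification with the Gel'fand inverse after noting $\mathbb{A}\cdot\mathbb{A'}^{-1}_R=\mathbb{I}$ by construction. The only difference is cosmetic: your description of which commutator terms cancel in the $R'_{12}$ simplification is slightly imprecise (it is the $ca^{-1}[b,a]$ terms that cancel), but the factorization and final form you state are exactly those of the paper.
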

 \begin{proof} The explicit form of the right residue matrix w.r.t.\ to the different ordering can obtained by a straight forward calculation. Using the explicit form of this matrix  we  calculate the {right decomposition matrix} via the following system of equations to be solved for $R_{ij}$, for $i,j=1,2$
 	\begin{align} 
 	aR_{11}+bR_{21}&=0 \\\label{k.2}
 	aR_{12}+bR_{22}&=[b,a](\Delta')^{-1} \\\label{k.3}
 	cR_{11}+dR_{21}&=[c,d](\Delta')^{-1} \\\label{k.4}
 	cR_{12}+dR_{22}&=([d,a]-[c,b])(\Delta')^{-1}. 
 	\end{align}
 Using the first Equation   we observe that $R_{11}=-a^{-1}bR_{21}$, substituting it into Equation (\ref{k.3}) gives us $R_{21}=(d-ca^{-1}b)^{-1}[c,d](\Delta')^{-1}$, therefore the final value for $R_{11}$ is expressed by
 \begin{align*}
   R_{11}&=-a^{-1}b(d-ca^{-1}b)^{-1}[c,d](\Delta')^{-1} \\&=
   - (db^{-1}a-c)^{-1}[c,d](\Delta')^{-1}.
 \end{align*} Next, we turn our attention to the remaining variables and proceed in the same fashion. First we take Equation (\ref{k.2}) and solve for $R_{12}$ which leads to $R_{12}=a^{-1}([b,a](\Delta')^{-1}-bR_{22})$, substituting this into Equation (\ref{k.4}) we obtain
 	\begin{align*}
 	R_{22}&=(d-ca^{-1}b)^{-1}\left([d,a]-[c,b]-ca^{-1}[b,a]\right)(\Delta')^{-1}\\
 	&=\left(a-(d-ca^{-1}b)^{-1}\Delta'\right)(\Delta')^{-1}.
 	\end{align*}
 	Thus, we find that $R_{12}$ is given by
 	\begin{align*} 
 	R_{12} &=a^{-1}([b,a] -b \left(d-ca^{-1}b)^{-1}\left([d,a]-[c,b]-ca^{-1}[b,a]\right)\right) (\Delta')^{-1}
 	\\&=a^{-1}b(d-ca^{-1}b)^{-1}\left((d-ca^{-1}b)b^{-1}[b,a] - \left([d,a]-[c,b]-ca^{-1}[b,a]\right)\right) (\Delta')^{-1}\\&=a^{-1}b(d-ca^{-1}b)^{-1}\left( [a,d]+[c,b]+db^{-1}[b,a]\right) (\Delta')^{-1}\\&= (db^{-1}a-c)^{-1}\left( [a,d]+[c,b]+db^{-1}[b,a]\right) (\Delta')^{-1}
 	\end{align*}
 	The former term can also be written as 
 	\begin{align*} 
 	R_{12} &=(db^{-1}a-c)^{-1}\left(ad-da+cb-bc+da-db^{-1}ab\right) (\Delta')^{-1}\\& 
 	=(db^{-1}a-c)^{-1}\left(ad-bc  +cb-db^{-1}ab\right) (\Delta')^{-1}\\& 
 	=(db^{-1}a-c)^{-1}\left(\Delta'  +(c -db^{-1}a)b\right) (\Delta')^{-1}
 	\\&
 	=\left(-b+ (db^{-1}a-c)^{-1}\Delta'\right)(\Delta')^{-1}
 	\end{align*}
 	where the form given in the last equality is technically easier to handle in the following calculations.  	Since we have a decomposition, we can calculate the \textbf{right inverse matrix} by Theorem \ref{inv} where the commutative inverse is defined using the determinant $\Delta'$. Next, we prove component-wise that with the different ordering of the determinant  we  reproduce, as before, the result obtained by Gel'fand. For the first component we have
 	\begin{align*}
 	(\mathbb{A}^{-1}_R)_{11}&= (d+(db^{-1}a-c)^{-1}[c,d])(\Delta')^{-1} 
 \\&=(db^{-1}a-c)^{-1}((db^{-1}a-c)d+[c,d])(\Delta')^{-1}\\&=(db^{-1}a-c)^{-1}d\underbrace{(b^{-1}ad-c)}_{=b^{-1}\Delta'}(\Delta')^{-1}\\
 	&=(db^{-1}a-c)^{-1}db^{-1}=(a-bd^{-1}c)^{-1}.
 	\end{align*} 
 	For both $(\mathbb{A}^{-1}_R)_{12}$ and $(\mathbb{A}^{-1}_R)_{22}$ the proof is immediate, for the sake of brevity we will not include it here. The third component renders
 	\begin{align*}
 (	\mathbb{A}^{-1}_R)_{21}&=-\left(c+(d-ca^{-1}b)^{-1}[c,d]\right)(\Delta')^{-1}\\&=-(d-ca^{-1}b)^{-1}\left((d-ca^{-1}b)c+[c,d]\right)(\Delta')^{-1}\\
 	&=-(d-ca^{-1}b)^{-1}c\left(d-a^{-1}bc\right)(\Delta')^{-1}\\&=-(d-ca^{-1}b)^{-1}ca^{-1}\\&=-(ac^{-1}d-b)^{-1}.
 	\end{align*}
 	 \end{proof}
 	Following the same train of thought we give a similar proposition w.r.t.\ the \textbf{left inverse matrix}.
 	\begin{proposition}Let  the determinant be defined by a different ordering, i.e.: $\Delta'=ad-bc$ and let $\,_{c}\mathbb{A'}^{-1}_L$ denote the left commutative inverse w.r.t.\ that  ordering, i.e.
 		\begin{align*}
 		\,_{c}\mathbb{A'}^{-1}_L={\Delta'}^{-1}\left(\begin{matrix}
 		d &-b \\ -c & a
 		\end{matrix}\right),
 		\end{align*}
 		Then, the  \textbf{left residue matrix} given by $\,_{c}\mathbb{A'}^{-1}_L\cdot\mathbb{A} -\mathbb{I}$ is explicitly given by
 			\begin{align*}
 		\mathbb{B}_L&=(\Delta')^{-1}\left(\begin{matrix}
 		[d,a] & [d,b] \\ [a,c] & [b,c]
 		\end{matrix}\right),
 		\end{align*} 
 		and the \textbf{left decomposition matrix} $\mathbb{T'}_L$ w.r.t.\ the different determinant ordering extracted from decomposition $\mathbb{B'}_L=\mathbb{T}_L\cdot\mathbb{A}$ is given explicitly by
 		\begin{align*}
 		\mathbb{T'}_L&=(\Delta')^{-1}\begin{pmatrix} 
 	 d- \Delta' (a-bd^{-1}c)^{-1}    &([d,b]-[d,a]a^{-1}b)(d-ca^{-1}b)^{-1} \\  -c+ \Delta' (ac^{-1}d-b)^{-1}   &([b,c]-[a,c]a^{-1}b)(d-ca^{-1}b)^{-1}
 		\end{pmatrix}.
 		\end{align*}
 		The inverse matrix obtained by this ordering, i.e.\ by the equality $\mathbb{A'}^{-1}_L=\,_{c}\mathbb{A'}^{-1}_L-\mathbb{T'}_L$ is equivalent to the inverse obtained by the ordering $\Delta=ad-bc$,
 		\begin{align*}
 		\mathbb{A'}^{-1}_L=\mathbb{A}^{-1}_L=\mathbb{A}_\mathcal{G}^{-1}.
 		\end{align*}
 	\end{proposition}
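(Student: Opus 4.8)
The plan is to mirror the proof of the preceding proposition and to treat the three claims (the residue, the decomposition, and the equivalence with the Gel'fand inverse) in turn. First I would obtain $\mathbb{B'}_L$ by the direct multiplication $\,_{c}\mathbb{A'}^{-1}_L\cdot\mathbb{A}$, exactly as in Proposition \ref{residues}: writing $da=ad+[d,a]$ on the $(1,1)$ slot and $-cb+ad=\Delta'+[b,c]$ on the $(2,2)$ slot, and using $\Delta'=ad-bc$, each diagonal entry becomes $\Delta'$ plus a commutator while the off-diagonal entries are already commutators. Since the factor $(\Delta')^{-1}$ sits on the \emph{left}, subtracting $\mathbb{I}$ cancels the $\Delta'$ contributions and leaves precisely $(\Delta')^{-1}\left(\begin{smallmatrix}[d,a]&[d,b]\\ [a,c]&[b,c]\end{smallmatrix}\right)$. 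This step is routine and presents no real difficulty.

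Next I would extract $\mathbb{T'}_L$ from $\mathbb{B'}_L=\mathbb{T'}_L\cdot\mathbb{A}$. Writing the entries of $\mathbb{T'}_L$ as $L_{ij}$ and expanding the product gives four equations that split into two independent $2\times 2$ systems, one for the first row $(L_{11},L_{12})$ and one for the second row $(L_{21},L_{22})$, just as in Theorem \ref{t1}. For each row I would eliminate one unknown from the first equation and substitute into the second. Eliminating $L_{11}$ (resp.\ $L_{21}$) produces $L_{12}$ (resp.\ $L_{22}$) directly in the commutator form with the common right factor $(d-ca^{-1}b)^{-1}$ stated in the proposition. Eliminating instead the other unknown yields $L_{11}$ and $L_{21}$ with a right factor $(a-bd^{-1}c)^{-1}$; to recast these in the announced form I would use the add-and-subtract-$\Delta'$ trick already employed in Theorem \ref{t1}, namely $[d,a]-[d,b]d^{-1}c=d(a-bd^{-1}c)-\Delta'$, which turns $L_{11}$ into $(\Delta')^{-1}\!\left(d-\Delta'(a-bd^{-1}c)^{-1}\right)$, together with the analogous manipulation for $L_{21}$, which additionally needs the conversion identity $a-bd^{-1}c=(ac^{-1}d-b)\,d^{-1}c$ to pass from $(a-bd^{-1}c)^{-1}$ to $(ac^{-1}d-b)^{-1}$.

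Finally I would form $\mathbb{A'}^{-1}_L=\,_{c}\mathbb{A'}^{-1}_L-\mathbb{T'}_L$ and verify componentwise that it coincides with the Gel'fand inverse displayed in Theorem \ref{inv}. The $(1,1)$ and $(2,1)$ entries are immediate: the inverse-form pieces of $L_{11}$ and $L_{21}$ cancel the $d$ and $-c$ coming from $\,_{c}\mathbb{A'}^{-1}_L$, and $(\Delta')^{-1}\Delta'=\mathbb{I}$ leaves exactly $(a-bd^{-1}c)^{-1}$ and $-(ac^{-1}d-b)^{-1}$. The $(1,2)$ and $(2,2)$ entries are the substantive ones: here I would put the contribution of $\,_{c}\mathbb{A'}^{-1}_L$ together with $L_{12},L_{22}$ over the common right factor $(d-ca^{-1}b)^{-1}$ and simplify the numerator, using $[a,c]a^{-1}b=aca^{-1}b-cb$ and $[d,a]a^{-1}b=db-ada^{-1}b$. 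The numerators then collapse to $\Delta' a^{-1}b$ and $\Delta'$ respectively, so after cancelling $(\Delta')^{-1}\Delta'$ one obtains $(d-ca^{-1}b)^{-1}$ for the $(2,2)$ entry and $-a^{-1}b(d-ca^{-1}b)^{-1}$ for the $(1,2)$ entry; the latter equals $-(db^{-1}a-c)^{-1}$ via $(d-ca^{-1}b)b^{-1}a=db^{-1}a-c$.

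The only genuine obstacle is bookkeeping in the noncommutative setting: every cancellation depends on keeping inverses and commutators in their correct left/right position, and the clean collapse of the numerators to $\Delta'$ (rather than to some unsymmetric expression) hinges on choosing the elimination order so that the common factor $(d-ca^{-1}b)$ or $(a-bd^{-1}c)$ appears without obstruction. Once the two conversion identities $a-bd^{-1}c=(ac^{-1}d-b)\,d^{-1}c$ and $(d-ca^{-1}b)b^{-1}a=db^{-1}a-c$ are in hand, the remaining manipulations are mechanical and parallel the two inverse computations already carried out in Theorem \ref{inv} and in the previous proposition.
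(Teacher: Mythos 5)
Your proposal is correct and follows essentially the same route as the paper: direct computation of the residue, row-wise elimination to solve $\mathbb{B'}_L=\mathbb{T'}_L\cdot\mathbb{A}$ for the entries $L_{ij}$, and a componentwise comparison with the Gel'fand inverse in which the $(1,1)$ and $(2,1)$ entries cancel trivially while the $(1,2)$ and $(2,2)$ numerators collapse to $\Delta' a^{-1}b$ and $\Delta'$ over the common right factor $(d-ca^{-1}b)^{-1}$. The only immaterial difference is that you obtain $L_{11}$ and $L_{21}$ by a second, independent elimination (landing directly on the factor $(a-bd^{-1}c)^{-1}$), whereas the paper back-substitutes $L_{12}$ and $L_{22}$ and then converts from $(ac^{-1}d-b)^{-1}$ to $(a-bd^{-1}c)^{-1}$.
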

 	\begin{proof}The explicit form of the right residue matrix w.r.t.\ to the different ordering can obtained, analogously to the former proof, by a straight forward calculation. By using the explicit form of the residue matrix we obtain the following associated system of equations in order to calculate the \textbf{left decomposition matrix} w.r.t.\ the different ordering 
 	\begin{align}
 		&(\Delta')^{-1}[d,a]=L_{11}a+L_{12}c \label{l.1}\\
 		&(\Delta')^{-1}[d,b]=L_{11}b+L_{12}d \label{l.2}\\
 		&(\Delta')^{-1}[a,c]=L_{21}a+L_{22}c \label{l.3} \\
 		&(\Delta')^{-1}[b,c]=L_{21}b+L_{22}d \label{l.4}.
 	\end{align} 
 	Taking Equation (\ref{l.1}) and solving for $L_{11}$ yields $L_{11}=((\Delta')^{-1}[d,a]-L_{12}c)a^{-1}$, substituting this into Equation (\ref{l.2}) gives us $L_{12}=(\Delta')^{-1}([d,b]-[d,a]a^{-1}b)(d-ca^{-1}b)^{-1}$ and therefore
 	\begin{align*}
 		L_{11}&=(\Delta')^{-1}([d,a]a^{-1}(ac^{-1}d-b)-([d,b]-[d,a]a^{-1}b))(ac^{-1}d-b)^{-1}\\
 		&=(\Delta')^{-1}([d,a]c^{-1}d-[d,b])(ac^{-1}d-b)^{-1}\\
 		&=(\Delta')^{-1}(d(a-bd^{-1}c)-\Delta' )(a-bd^{-1}c)^{-1}\\
 		&=((\Delta')^{-1}d-(a-bd^{-1}c)^{-1} ).
 	\end{align*}
 	For the two remaining variables we take Equation (\ref{l.3}) and solve for $L_{21}$, obtaining $L_{21}=((\Delta')^{-1}[a,c]-L_{22}c )a^{-1}$, which substituted into Equation (\ref{l.4}) yields $L_{22}=(\Delta')^{-1}([b,c]-[a,c]a^{-1}b)(d-ca^{-1}b)^{-1}$, obtaining at last
 	\begin{align*}
 		L_{21}&=(\Delta')^{-1}([a,c]a^{-1}(ac^{-1}d-b)-([b,c]-[a,c]a^{-1}b) )(ac^{-1}d-b)^{-1}\\
 		&=(\Delta')^{-1}([a,c]c^{-1}d-[b,c] )(ac^{-1}d-b)^{-1}\\
 		&=((ac^{-1}d-b)^{-1}-(\Delta')^{-1}c ).
 	\end{align*}
 	From the former equalities  one obtains the concrete form of the decomposition matrix $\mathbb{T}_L$ and therefore the left inverse matrix  explicitly, i.e.\ \begin{align*}
 	\mathbb{A'}^{-1}_L =\,_{c}\mathbb{A}^{-1}_L-\mathbb{T}_L.
 		\end{align*}
Next, we prove that the left inverses defined with the different orderings agree. Let us first write the inverse left matrix with the second ordering explicitly down, 
	\begin{align*}
\mathbb{A'}^{-1}_L&=(\Delta')^{-1}\begin{pmatrix} 
  \Delta' (a-bd^{-1}c)^{-1}    &-b-([d,b]-[d,a]a^{-1}b)(d-ca^{-1}b)^{-1} \\   -\Delta' (ac^{-1}d-b)^{-1}   &a-([b,c]-[a,c]a^{-1}b)(d-ca^{-1}b)^{-1}
\end{pmatrix}.
\end{align*}
 For the components $(\mathbb{A'}^{-1}_L)_{11}$ and $(\mathbb{A'}^{-1}_L)_{21}$ the equivalence to the Gel'fand inverse is straight forward while for the component $(\mathbb{A'}^{-1}_L)_{12}$
 we have 
 \begin{align*}
(\mathbb{A'}^{-1}_L)_{12}&=(\Delta')^{-1}\left( -b-([d,b]-[d,a]a^{-1}b)(d-ca^{-1}b)^{-1}\right)\\&=
-(\Delta')^{-1}\left( b(d-ca^{-1}b)+([d,b]-[d,a]a^{-1}b)\right)(d-ca^{-1}b)^{-1} \\&=
-(\Delta')^{-1}\left( -bc  +ad\right)a^{-1}b(d-ca^{-1}b)^{-1}\\&= 
- (db^{-1}a-c)^{-1}
 \end{align*}
 where in the last lines we multiplied by the inverse of the determinant and pulled the term $(a^{-1}b)$ in  to the bracket and for the component  $(\mathbb{A'}^{-1}_L)_{22}$ we analogously have
  \begin{align*}
 (\mathbb{A'}^{-1}_L)_{12}&=(\Delta')^{-1}\left(
 a-([b,c]-[a,c]a^{-1}b)(d-ca^{-1}b)^{-1}\right)\\&=
 (\Delta')^{-1}\left(
 a(d-ca^{-1}b)-([b,c]-[a,c]a^{-1}b)\right)(d-ca^{-1}b)^{-1}\\&= 
 (\Delta')^{-1}\left(
 ad - bc \right)(d-ca^{-1}b)^{-1}\\&=
 (d-ca^{-1}b)^{-1}
  \end{align*} 
  		 	\end{proof}
 Due to the form of the decomposition matrices we conclude that for the left inverse matrix the first choice of ordering w.r.t.\ the determinant (i.e.\ $\Delta$)  and in regards to the right inverse   the latter ordering of the determinant (i.e.\ $\Delta'$) has a more natural form. With more natural we mean that there is a natural ordering of the components of the  decomposition matrices such that they all contain commutators. Although   the inverse itself does not depend on the ordering, the particular  decomposition matrices  can be useful during effective calculations in noncommutative geometry in order  to produce higher orders in the deformation parameter. This in turn will prove to be technically valuable. 
 
\section*{Acknowledgements}
The authors would like to thank Marcos Rosenbaum and David Vergara for many fruitful discussions that lead to the idea developed here. One of the authors (AM) acknowledges partial support from the Conacyt project  $258259$.
\bibliographystyle{alpha}
\bibliography{CLAST1}

\begin{thebibliography}{MRVVC17}

\bibitem[BM14]{BM}
Edwin~J. Beggs and Shahn Majid.
\newblock {Gravity induced from quantum spacetime}.
\newblock {\em Class. Quant. Grav.}, 31:035020, 2014.

\bibitem[Cay45]{Ca}
Arthur Cayley.
\newblock Xiii. on certain results relating to quaternions.
\newblock {\em Philos. Mag.}, 26(171):141--145, 1845.

\bibitem[{Ces}17]{MR1}
{Cesar A. Aguillon and Albert Much and Marcos Rosenbaum and J. David Vergara}.
\newblock {Noncommutative Riemannian Geometry from Quantum Spacetime Generated
  by Twisted Poincar\'e Group}.
\newblock {\em {Journal of Mathematical Physics}}, {58}({11}):{112301}, {2017}.

\bibitem[DFR95]{DFR}
Sergio Doplicher, Klaus Fredenhagen, and John~E. Roberts.
\newblock {The Quantum structure of space-time at the Planck scale and quantum
  fields}.
\newblock {\em Commun.Math.Phys.}, 172:187--220, 1995.

\bibitem[Fey48]{Fe}
R.~P. Feynman.
\newblock Space-time approach to non-relativistic quantum mechanics.
\newblock {\em Rev. Mod. Phys.}, 20:367--387, Apr 1948.

\bibitem[GGRW05]{G}
Israel Gelfand, Sergei Gelfand, Vladimir Retakh, and Robert~Lee Wilson.
\newblock Quasideterminants.
\newblock {\em Advances in Mathematics}, 193(1):56 -- 141, 2005.

\bibitem[HR06]{A1}
E.~Harikumar and Victor~O. Rivelles.
\newblock {Noncommutative Gravity}.
\newblock {\em Class. Quant. Grav.}, 23:7551--7560, 2006.

\bibitem[Lan14]{landi_2014}
Giovanni Landi.
\newblock {\em An Introduction to Noncommutative Spaces and Their Geometries}.
\newblock Springer Berlin, 2014.

\bibitem[MRVVC17]{MR2}
Albert Much, Marcos Rosenbaum, J.~David Vergara, and Diego Vidal-Cruzprieto.
\newblock {Quantum-Corrected Einstein Equations for a Noncommutative Spacetime
  of Lie-Algebraic Type}.
\newblock arXiv: 1705.03499, 2017.

\bibitem[Wal07]{WA}
S.~Waldmann.
\newblock {\em Poisson-Geometrie und Deformationsquantisierung: Eine
  Einf{\"u}hrung}.
\newblock Springer-Lehrbuch Masterclass. Springer Berlin Heidelberg, 2007.

\bibitem[Wes03]{W}
Julius Wess.
\newblock {Deformed coordinate spaces: Derivatives}.
\newblock In {\em {Proceedings, 2nd Southeastern European Workshop on
  Mathematical, theoretical and phenomenological challenges beyond the standard
  model: Perspectives of the Balkan collaborations (BW2003).: Vrnjacka Banja,
  Serbia and Montenegro, August 29-September 3, 2003}}, pages 122--128, 2003.

\end{thebibliography}
\end{document}